\documentclass[12pt]{article}

\usepackage{amsmath,amssymb,amsthm}
\usepackage[authoryear,round]{natbib}
\usepackage{microtype}
\usepackage{setspace}
\usepackage[margin=1in]{geometry}
\usepackage[T1]{fontenc}
\usepackage{lmodern}
\usepackage{xcolor}
\usepackage[colorlinks=true,citecolor=blue,linkcolor=blue,urlcolor=blue]{hyperref}
\hypersetup{pdftitle={Recovering Weighted Shapley Values from Position Preferences},pdfauthor={Sinan Ertemel}}

\newtheorem{theorem}{Theorem}
\newtheorem{corollary}{Corollary}
\newtheorem{proposition}{Proposition}
\newtheorem{axiom}{Axiom}
\newtheorem{condition}{Condition}
\theoremstyle{definition}
\newtheorem{example}{Example}

\newcommand{\cG}{\mathcal{G}}
\newcommand{\bzero}{\mathbf{0}}
\newcommand{\supp}{\operatorname{supp}}

\begin{document}

\title{Recovering Weighted Shapley Values from Position Preferences\thanks{During the preparation of this manuscript, the author used generative AI tools to assist with drafting, language editing, and checking mathematical arguments. The author takes full responsibility for the content of the paper.}}
\author{Sinan Ertemel\thanks{Department of Economics, Istanbul Technical University, Ma\c{c}ka 34367 Istanbul, T\"urkiye. Correspondence: \href{mailto:ertemels@itu.edu.tr}{ertemels@itu.edu.tr}.}}
\date{}
\maketitle

\begin{abstract}\singlespacing
Weighted Shapley values represent asymmetry through positive weights and ordered priority classes. We give these parameters a preference interpretation in Roth's framework of lotteries over positions in transferable-utility games. Under common calibration, nullity, ordinary risk neutrality, unanimity efficiency, and unanimity positivity, position values in unanimity games are probability equivalents. Luce conditioning of these probabilities characterizes the Kalai--Samet family and is equivalent to partnership consistency. We separate conditioning into transitivity of revealed priority, pairwise determination of support, and invariance of positive odds. Two-player observations admit a weighted Shapley extension exactly when revealed priority is transitive and within-class odds satisfy a product rule. The extension is unique: pairwise comparisons identify the priority classes and weight ratios within each class. Larger unanimity games provide additional restrictions on supports and shares. We give constructive proofs, establish independence of the five axioms, and explain the scope and limits of the elicitation interpretation.
\end{abstract}

\noindent\textit{Keywords:} weighted Shapley value; position preferences; certainty equivalents; Luce choice; revealed priority; partnership consistency.

\medskip
\noindent\textit{JEL classification:} C71; D81; D91.

\section{Introduction}

Weighted Shapley values allow cooperative surplus to be divided asymmetrically. As axiomatized by \citet{KalaiSamet1987,KalaiSamet1988}, they split each coalition's Harsanyi dividend among its highest-priority members in proportion to positive weights. The weights and the priority ordering may represent bargaining strength, entitlement, or institutional standing. Consider an evaluator comparing positions in a joint project whose return requires every participant. The value assigned to a position can be elicited against a lottery that awards its holder the return alone. We ask which comparisons across such projects admit a weighted Shapley representation, and how the weights and priorities can be recovered from those comparisons.

We build on \citet{Roth1977SV}, who interprets the value of a game as the von Neumann--Morgenstern utility of holding a position in it. Roth's evaluator is neutral to ordinary risk, which makes positional utilities linear in the game, and neutral to strategic risk, which requires a position in an $r$-player unanimity game to be worth a $1/r$ chance of holding the dividend alone. Together with his remaining axioms, these assumptions yield the symmetric Shapley value \citep{Shapley1953}. We keep the structure of Roth's strategic-risk condition but leave the probabilities free. Under nullity, efficiency, and positivity on unanimity games, the evaluator's certainty equivalent for a position in a unanimity game becomes a probability equivalent: the evaluator behaves as if each coalition's dividend were awarded entirely to one member, with member $i$ of coalition $S$ winning with probability $q_i(S)$. The restrictions on these probabilities across coalitions will determine which forms of asymmetry are possible.

Theorem~\ref{thm:main} shows that, under the remaining axioms, the evaluator's valuation is a weighted Shapley value if and only if the winning probabilities obey Luce's choice axiom in its zero-support form: whenever a subcoalition receives positive probability within a larger coalition, winning probabilities in the subcoalition are the conditional probabilities from the larger one. A player has lower priority than another exactly when the evaluator regards that player's position in their two-player unanimity game as worthless, and within a priority block the pairwise odds reveal the corresponding weight ratios. The same condition is equivalent to the partnership consistency of \citet{KalaiSamet1987}, both on unanimity games and on all games, which ties the preference axiom to the value-level literature.

Proposition~\ref{prop:decomp} shows that Luce conditioning decomposes into three conditions: revealed priority must be transitive, the positions valued positively in any coalition must be exactly those not revealed to rank below another member, and the odds between two such positions must not depend on who else is present. Two-player data reveal candidate priorities and odds, which must themselves satisfy transitivity and a product rule. Proposition~\ref{prop:pairwise} proves that these restrictions are necessary and sufficient for pairwise observations to admit a unique weighted Shapley extension. Under maintained unanimity efficiency, one probability equivalent per unordered pair suffices to recover that extension. Every larger unanimity game then supplies an overidentifying test of its predicted support and shares (Corollary~\ref{cor:tests}). Further implications include regularity, in the sense that a position never gains when the coalition grows, and the irrelevance of lower-priority partners. These are restrictions on an evaluator's exact indifferences. Their use in an experiment would additionally require a protocol for valuing positions and a treatment of elicitation error. Finally, the five axioms are independent, and three players is the smallest setting in which conditioning has content (Proposition~\ref{prop:independence}).

Earlier preference foundations provide the starting point. \citet{Roth1977SV} obtains the symmetric Shapley value, and \citet{Roth1977BA} shows that alternative assumptions about bargaining ability lead instead to the Banzhaf--Coleman index; in both, all positions are treated alike. \citet{BlairMcLean1990} derive probabilistic values \citep{Weber1988} from subjective expected utility and characterize semivalues and the Shapley and Banzhaf values within that class. Here we retain efficiency and allow positions to be valued differently. Conditioning then yields the Kalai--Samet family: the ordered priority partition is uniquely identified, and the weights are identified up to a separate positive normalization within each block.

Weighted Shapley values have also been characterized directly as allocation rules. \citet{KalaiSamet1987} use efficiency, additivity, positivity on monotone games, the dummy axiom, and partnership consistency, and subsequent work modifies symmetry, marginality, or priority requirements \citep{Chun1991,NowakRadzik1995,Casajus2019,Casajus2026,BealEtAl2023}. At the level of the induced value, our axioms parallel theirs: ordinary risk neutrality corresponds to additivity, unanimity efficiency to efficiency, nullity with calibration to the dummy axiom, and conditioning to partnership consistency, while positivity is needed only on unanimity games. Our axioms are stated in terms of an evaluator's indifferences. The decomposition of conditioning and the pairwise extension result express the representation in terms of comparisons that an evaluator can make between positions and calibrated lotteries.

The connection between consistent dividend sharing and weighted Shapley values is already present in the literature. \citet{DerksHallerPeters2000} study sharing systems and Bayesian consistency within the selectope; \citet{BillotThisse2005} formulate related restrictions through Luce choice. \citet{Dehez2017} develops the dividend-sharing approach, and \citet{Dehez2026} discusses the earlier attribution of the consistency result. In random choice theory, \citet{CerreiaVioglioEtAl2021} characterize Luce's choice axiom without strict positivity by a rational support correspondence and proportional choice within that support. Their finite-domain representation provides the ordered classes and positive weights used here. We do not claim a new representation of Luce choice or a new connection between conditioning and weighted Shapley values. Our contribution is to formulate these restrictions on preferences over game positions, relate them to partnership consistency under the same assumptions, and give an explicit account of what pairwise indifferences identify and what larger-coalition observations can reject.

The paper proceeds as follows. Section~\ref{sec:model} introduces the preference framework and the unanimity axioms. Section~\ref{sec:char} states and proves the characterization. Section~\ref{sec:behavior} develops the pairwise decomposition, identification, testable restrictions, and elicitation. Section~\ref{sec:concl} concludes. Appendix~\ref{sec:indep} establishes independence of the axioms.

\section{Preferences and unanimity shares}\label{sec:model}

Let $N$ be finite and nonempty, and let $\cG$ be the vector space of transferable-utility games $v:2^N\to\mathbb{R}$ with $v(\varnothing)=0$. For nonempty $S\subseteq N$, let $u_S(T)=1$ if $S\subseteq T$ and zero otherwise. The unanimity games form a basis:
\begin{equation}\label{eq:mobius}
v=\sum_{\varnothing\ne S\subseteq N}\Delta_v(S)u_S,
\qquad
\Delta_v(S)=\sum_{T\subseteq S}(-1)^{|S|-|T|}v(T).
\end{equation}
The full vector-space domain permits scaling and sign reversal, both of which are used below.

An evaluator ranks finite lotteries over $\cG\times N$, where $(v,i)$ means position $i$ in $v$. Assume the von Neumann--Morgenstern axioms, so a pure-position index $U:\cG\times N\to\mathbb{R}$ evaluates lotteries by expectation \citep{HersteinMilnor1953}. For all $i,j,k\in N$, impose the common payoff calibration
\[
(\bzero,i)\sim(\bzero,j),
\qquad
(u_{\{i\}},i)\sim(u_{\{j\}},j)\succ(\bzero,k).
\]
All comparisons are made by the same evaluator. The calibration is therefore a restriction on one preference relation, rather than an interpersonal comparison of different players' utilities. It permits the global normalization $U(\bzero,i)=0$ and $U(u_{\{i\}},i)=1$ for every $i$. Write $\theta_i(v)=U(v,i)$ and $\theta(v)=(\theta_i(v))_{i\in N}$.

Calibration assigns the same value to controlling one unit alone, regardless of the player's identity. Positions may still be valued differently when obtaining the unit requires cooperation.

\begin{axiom}[Nullity, N]\label{ax:nullity}
For every $v\in\cG$ and $i\in N$, if $v(T\cup\{i\})=v(T)$ for all $T\subseteq N\setminus\{i\}$, then $(v,i)\sim(\bzero,i)$.
\end{axiom}

A player who changes no coalition's worth has a position worth the same as a position in the null game. In a unanimity game, this excludes outsiders from the dividend while allowing its indispensable members to be valued differently.

\begin{axiom}[Ordinary risk neutrality, OR]\label{ax:risk}
For every $i\in N$, $v,w\in\cG$, and $p\in[0,1]$,
\[
(pv+(1-p)w,i)\sim[p(v,i);(1-p)(w,i)].
\]
\end{axiom}
The expression in brackets denotes a lottery. The evaluator is indifferent between averaging coalition worths before evaluating position $i$ and facing a lottery over that position in the original games. Expected utility alone does not imply this indifference, since it does not specify the value of the averaged game. The axiom concerns uncertainty over games and leaves open how dependence on partners affects the value of a position.

The axiom makes $\theta_i$ affine. Since $\theta_i(\bzero)=0$, affinity gives $\theta_i(tv)=t\theta_i(v)$ for $t\in[0,1]$; for $t>1$, apply this identity to $tv$ with coefficient $1/t$. Midpoint affinity of $v$ and $-v$ gives $\theta_i(-v)=-\theta_i(v)$, and hence
$\theta_i(v+w)=2\theta_i((v+w)/2)=\theta_i(v)+\theta_i(w)$. Thus $\theta_i$ is linear; the full vector-space domain is what allows homogeneity for $t>1$ and $t<0$ to be derived.

For nonempty $S$, write $q_i(S)=\theta_i(u_S)$, $q(S)=(q_i(S))_{i\in N}$, and $q_S(T)=\sum_{j\in T}q_j(S)$ for $T\subseteq N$. We use $\supp q(S)=\{i\in N:q_i(S)>0\}$ when the shares are nonnegative. By (N), $q_i(S)=0$ outside $S$. The \emph{normalized certainty equivalent} of position $i$ in $u_S$ is the number $c$ satisfying $(u_S,i)\sim(c\,u_{\{i\}},i)$. By (OR) and calibration, $\theta_i(c\,u_{\{i\}})=c$, so this certainty equivalent exists, is unique, and equals $q_i(S)$. The three conditions below are therefore restrictions on observable indifferences.

\begin{axiom}[Unanimity efficiency, UE]\label{ax:efficiency}
For every nonempty $S\subseteq N$, $\sum_{i\in S}q_i(S)=1$.
\end{axiom}

The members of $S$ jointly control one unit, so their position values must sum to one. The entire value may nevertheless be assigned to a single member. Because it adds certainty equivalents of different positions, (UE) relies on the cross-position calibration above. Together with (N) and (OR), it is equivalent to full efficiency, $\sum_{i\in N}\theta_i(v)=v(N)$ for every $v\in\cG$.

\begin{axiom}[Unanimity positivity, UP]\label{ax:positivity}
For every nonempty $S\subseteq N$ and every $i\in S$, $q_i(S)\geq 0$.
\end{axiom}

Equivalently, $(u_S,i)\succsim(\bzero,i)$ whenever $i\in S$. A position needed to produce a positive unit is thus at least as valuable as a null position. Equality is allowed: a member may be indispensable yet receive zero value in the presence of higher-priority partners. Under the characterization below, strict positivity for every member would restrict the representation to a single priority block.

Under (N), (UE), and (UP), $q(S)$ is a probability distribution on the ``menu'' $S$, and
\[
(u_S,i)\sim[q_i(S)(u_{\{i\}},i);(1-q_i(S))(\bzero,i)].
\]
Thus, for $i\in S$, $q_i(S)$ is both the normalized certainty equivalent and the probability equivalent of position $i$ in $u_S$. Roth's strategic-risk neutrality is the special case $q_i(S)=1/|S|$.

Holding position $i$ in $u_S$ is worth the same as a $q_i(S)$ chance of holding the entire unit dividend alone. Linearity extends this interpretation to signed dividends. For each $S$, let $W_S\in S$ have distribution $q(S)$, and set $X_i(v)=\sum_{S\ni i}\Delta_v(S)\mathbf{1}_{\{W_S=i\}}$. Then $\mathbb{E}X_i(v)=\theta_i(v)$, and $(v,i)$ is indifferent to the finite lottery over $(X_i(v)u_{\{i\}},i)$. Negative dividends are assigned as liabilities. Only the marginal distributions of the $W_S$ matter for this equality; independence across coalitions is unnecessary. These draws represent the evaluator's indifferences and are not a proposed bargaining procedure.

For a subgroup $T$ of $S$, we can elicit position values directly in $u_T$ or condition the draw in $u_S$ on its winner belonging to $T$. We require the two evaluations to agree whenever the conditioning event has positive probability. Removing partners may then change a player's winning probability, but the odds between members with positive shares remain the same.

\begin{axiom}[Luce conditioning, LC]\label{ax:conditioning}
For every nonempty $S\subseteq N$ and every $\varnothing\ne T\subsetneq S$ with $q_S(T)>0$,
\begin{equation}
q_i(T)=\frac{q_i(S)}{q_S(T)}\qquad(i\in T).
\tag{LC}
\end{equation}
\end{axiom}
Under (UP), this is equivalent to $q_i(S)=q_S(T)q_i(T)$ for every $i\in T$. Indeed, if $q_S(T)=0$, nonnegativity forces $q_i(S)=0$ for every $i\in T$. With full support, this is Luce's familiar choice axiom; the stated zero-support form allows players receiving zero mass in $S$ to receive positive shares once a higher-priority tier is removed \citep{Luce1959,CerreiaVioglioEtAl2021}.

For example, if three members have shares $(1/2,1/3,1/6)$, conditioning on the first two requires shares $(3/5,2/5)$ in their two-player game: their odds remain $3:2$. By contrast, if their shares in the three-player game are $(0,0,1)$, the axiom places no restriction on how the first two divide the unit when considered alone. The event being conditioned on has probability zero. The first two players can therefore have their own priority ranking once the third is absent.

\section{Characterization}\label{sec:char}

A weight system is $\omega=(\lambda,\Sigma)$, where $\lambda\in\mathbb{R}_{++}^N$ and $\Sigma=(B_1,\ldots,B_m)$ is an ordered partition from lowest to highest priority. Put
\[
A_\omega(S)=S\cap B_{\max\{r:S\cap B_r\ne\varnothing\}}.
\]
The weighted Shapley value $\phi^\omega$ is the linear map defined on the unanimity basis by
\begin{equation}\label{eq:wsv}
\phi_i^\omega(u_S)=
\begin{cases}
\displaystyle\frac{\lambda_i}{\sum_{j\in A_\omega(S)}\lambda_j},&i\in A_\omega(S),\\[1.2ex]
0,&i\notin A_\omega(S).
\end{cases}
\end{equation}
Thus only members of the highest-priority block represented in $S$ share its unanimity dividend.

A nonempty coalition $P$ is a \emph{partnership} in $v$ when $v(R\cup L)=v(R)$ for every $L\subsetneq P$ and $R\subseteq N\setminus P$. Write $\theta_P(v)=\sum_{j\in P}\theta_j(v)$. Partnership consistency requires
\[
\theta_i(v)=\theta_i\!\left(\theta_P(v)u_P\right)\qquad(i\in P)
\]
whenever $P$ is a partnership in $v$.

The members of a partnership contribute only as a group; a proper subset adds nothing to a coalition outside $P$. Consistency requires that each member's value be unchanged when the original game is replaced by a unanimity game among the partners with worth equal to their total value. The theorem relates this way of dividing the partnership's value to conditioning.

\begin{theorem}[Preference characterization]\label{thm:main}
Under the von Neumann--Morgenstern and common-calibration assumptions, $\theta$ is a Kalai--Samet weighted Shapley value if and only if (N), (OR), (UE), (UP), and (LC) hold. Given the first four axioms, the following are equivalent: (LC), partnership consistency on unanimity games, and partnership consistency on all games.
\end{theorem}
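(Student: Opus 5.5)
The proof has two halves: the characterization, and the equivalence with partnership consistency. We already know that (N) and (OR) make each $\theta_i$ linear and that $q_i(S)=0$ for $i\notin S$. So $\theta$ is pinned down by the family of share vectors $q(S)$, and $\theta=\phi^\omega$ holds exactly when $q(S)=\phi^\omega(u_S)$ for every nonempty $S$. Everything therefore reduces to a statement about the family $\{q(S)\}$. Under (UE) and (UP), each $q(S)$ is a probability distribution on $S$.

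\emph{Necessity.} Take $\theta=\phi^\omega$ and check the five axioms directly. Linearity gives (OR). (N) holds because a null player in $v$ lies in no $S$ with $\Delta_v(S)\neq 0$. (UE) and (UP) follow from (\ref{eq:wsv}). For (LC), fix $T\subsetneq S$ with $q_S(T)>0$. Then $T\cap A_\omega(S)\neq\varnothing$, which forces $A_\omega(T)=T\cap A_\omega(S)$: the top block represented in $T$ is the top block of $S$. Dividing the weight ratios then gives the conditional formula. The calibration is immediate from $\phi^\omega_i(u_{\{i\}})=1$.

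\emph{Sufficiency.} Starting from the axioms, build $\omega$ from pairwise data. Define $i\rhd j$ when $q_i(\{i,j\})=1$, and $i\approx j$ when both pairwise shares are positive. The key step is to show that (LC) makes $\rhd$ a strict weak order whose indifference relation is $\approx$. I would take a triple $\{i,j,k\}$ and condition $q(\{i,j,k\})$ on each pair that carries positive mass. For example, if $i\rhd j$ and $j\rhd k$, suppose $q_k(\{i,j,k\})>0$. Conditioning on $\{j,k\}$ gives $q_{\{i,j,k\}}(\{j,k\})>0$, so $q_k(\{j,k\})>0$, a contradiction. Hence all mass sits on $\{i,j\}$, then on $i$, and so $i\rhd k$ by conditioning on $\{i,k\}$. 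The case analysis over which pairs in the triple carry positive mass is the main obstacle: one must check that every configuration of the triple's support is consistent only with a weak order. This is essentially the Cerreia-Vioglio et al.\ representation, which I would invoke or reprove on triples. Next, let the blocks $B_r$ be the $\approx$-classes, ordered by $\rhd$. Within a class, fix a reference player $i_0$ and set $\lambda_i=q_i(\{i,i_0\})/q_{i_0}(\{i,i_0\})$. The product rule $\lambda_i/\lambda_j=q_i(\{i,j\})/q_j(\{i,j\})$ follows from (LC) on $\{i,j,i_0\}$, all of whose shares are positive by the weak-order step. Finally, show $\supp q(S)=A_\omega(S)$. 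Any dominated $j\in S$ has some $i\in S$ with $i\rhd j$, and conditioning on $\{i,j\}$ forces $q_j(S)=0$; the support is nonempty by (UE). Within the support, conditioning on pairs gives shares proportional to $\lambda$.

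\emph{Partnership consistency.} For unanimity games, $P$ is a partnership in $u_S$ when $P\subseteq S$ or $P\cap S=\varnothing$. Since $\theta_i(\theta_P(u_S)u_P)=q_S(P)q_i(P)$, consistency on unanimity games says exactly $q_i(S)=q_S(P)q_i(P)$, which is the multiplicative form of (LC) under (UP). The disjoint case is trivial by (N). Consistency on all games implies consistency on unanimity games. For the converse, (LC) gives $\theta=\phi^\omega$, and I would show directly that $\phi^\omega$ is partnership consistent. If $P$ is a partnership in $v$, every $S$ with $\Delta_v(S)\neq0$ meeting $P$ contains $P$; this is checked from the definition via Möbius inversion. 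Then $\phi^\omega_i(v)=\sum_{S\supseteq P}\Delta_v(S)q_S(P)q_i(P)=\theta_P(v)q_i(P)$ for $i\in P$, using (LC) termwise.
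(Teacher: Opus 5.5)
Your proof is correct. Necessity and the partnership-consistency step are essentially the paper's: it uses the same vanishing of $\Delta_v(S)$ when $S$ meets $P$ properly, followed by multiplicative conditioning applied termwise.

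Sufficiency takes a genuinely different route. The paper never consults two-player games. It peels the grand coalition from the top: $R_0=N$, $H_r=\supp q(R_{r-1})$, $R_r=R_{r-1}\setminus H_r$, and $\lambda_i=q_i(R_{r-1})$ for $i\in H_r$. For any $S$, take $r$ minimal with $S\cap H_r\neq\varnothing$. Then $S\subseteq R_{r-1}$ and $q_{R_{r-1}}(S)>0$, so one application of (LC) to $S\subseteq R_{r-1}$ yields \eqref{eq:wsv}, zero shares included. Each block's weights are read off a single distribution, so the paper never has to verify a weak order, a product rule, or a support claim separately.

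Your bottom-up construction from pairs must prove all three, so it is longer. In exchange, it only ever conditions on subcoalitions with at most two members. It therefore shows that (LC) restricted to $|T|\le 2$ already implies the full axiom, given the other four. It also derives (TP), (PS), (OI) and pairwise identification directly from (LC). The paper obtains these only afterwards, through the representation (Proposition~\ref{prop:decomp}, Corollary~\ref{cor:identification}).

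Three details to tighten.

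First, the triple case analysis you flag as the main obstacle reduces to two one-line consequences of (LC). If $q_j(S)>0$, then $q_j(\{i,j\})=q_j(S)/q_S(\{i,j\})>0$ for every $i\in S\setminus\{j\}$. If $q_i(S)>0=q_j(S)$, then $q_j(\{i,j\})=0$, so $q_i(\{i,j\})=1$. Applied to $S=\{i,j,k\}$, these give transitivity at once.

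Second, for $\supp q(S)=A_\omega(S)$ you prove only the inclusion $\subseteq$ and nonemptiness. The reverse inclusion follows from the second fact: a zero-share member of the top block in $S$, paired with a positive-share member, would be strictly dominated by it. You also rely on this reverse inclusion when you claim every share in $q(\{i,j,i_0\})$ is positive, and transitivity alone does not give that.

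Third, your cycle (LC) $\Rightarrow$ consistency on all games $\Rightarrow$ consistency on unanimity games $\Rightarrow$ (LC) needs only that every $P\subseteq S$ is a partnership in $u_S$. Unlike the paper, you do not need that these are the only partnerships. Just weaken ``says exactly'' to ``implies''.
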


\begin{proof}
Suppose first that the five axioms hold. By (OR), every $\theta_i$ is linear, so \eqref{eq:mobius} gives
\[
\theta_i(v)=\sum_{\varnothing\ne S\subseteq N}\Delta_v(S)q_i(S).
\]
Set $R_0=N$. Whenever $R_{r-1}\ne\varnothing$, define
\[
H_r=\supp q(R_{r-1}),\qquad
R_r=R_{r-1}\setminus H_r,
\qquad
\lambda_i=q_i(R_{r-1})\quad(i\in H_r).
\]
By (N), $H_r\subseteq R_{r-1}$; by (UE) and (UP), it is nonempty with positive weights. Thus the recursion terminates; let $m$ be the first index with $R_m=\varnothing$, so that $H_1,\ldots,H_m$ partition $N$ and $\lambda\in\mathbb{R}_{++}^N$. For nonempty $S$, let $r=\min\{k:S\cap H_k\ne\varnothing\}$. Then $S\subseteq R_{r-1}$ and $q_{R_{r-1}}(S)=\sum_{j\in S\cap H_r}\lambda_j>0$. If $S\subsetneq R_{r-1}$, (LC) gives
\[
q_i(S)=
\begin{cases}
\displaystyle\frac{\lambda_i}{\sum_{j\in S\cap H_r}\lambda_j},&i\in S\cap H_r,\\[1.2ex]
0,&i\notin S\cap H_r.
\end{cases}
\]
For $S=R_{r-1}$, the formula follows from the definitions and (UE), since $\sum_{j\in H_r}\lambda_j=1$. Thus \eqref{eq:wsv} holds for $\Sigma=(H_m,\ldots,H_1)$, and linearity yields $\theta=\phi^\omega$.

Conversely, suppose $\theta=\phi^\omega$. Linearity gives (OR), and \eqref{eq:wsv} gives (UE) and (UP). For (LC), let $\varnothing\ne T\subsetneq S$ with $q_S(T)>0$. Then $T$ meets $A_\omega(S)$, so the highest-priority block represented in $T$ is the one represented in $S$, and $A_\omega(T)=T\cap A_\omega(S)$. Dividing the shares in \eqref{eq:wsv} gives $q_i(S)/q_S(T)=\lambda_i/\sum_{j\in A_\omega(T)}\lambda_j=q_i(T)$ for $i\in A_\omega(T)$, and both sides vanish for $i\in T\setminus A_\omega(T)$. If $i$ is null in $v$, then $\Delta_v(S)=0$ for $S\ni i$, while \eqref{eq:wsv} gives $\phi_i^\omega(u_S)=0$ for $S\not\ni i$; hence \eqref{eq:mobius} gives (N).

It remains to connect conditioning and partnership consistency. A nonempty $P$ is a partnership in $u_S$ exactly when $P\subseteq S$ or $P\cap S=\varnothing$. Sufficiency is immediate. Otherwise, $P$ meets $S$ without being contained in it, so $L=P\cap S$ is a nonempty proper subset of $P$, and $R=S\setminus P\subseteq N\setminus P$ satisfies $u_S(R\cup L)=1>u_S(R)=0$.

For $P\subseteq S$, linearity turns partnership consistency into
\begin{equation}\label{eq:mult}
q_i(S)=q_S(P)q_i(P),\qquad i\in P.
\end{equation}
For proper $P\subsetneq S$, \eqref{eq:mult} is the multiplicative form of (LC). The cases $P=S$ and $P\cap S=\varnothing$ follow from (UE) and (N), respectively. This proves the equivalence on unanimity games.

Finally, let $P$ be a partnership in an arbitrary game $v$. Then $\Delta_v(S)=0$ whenever $\varnothing\ne S\cap P\subsetneq P$. Indeed, write $S=R\cup L$, where $R\subseteq N\setminus P$ and $\varnothing\ne L\subsetneq P$. Since partnership gives $v(A\cup B)=v(A)$ for all $A\subseteq R$ and $B\subseteq L$,
\[
\begin{aligned}
\Delta_v(S)
&=\sum_{A\subseteq R}\sum_{B\subseteq L}
(-1)^{|R|-|A|+|L|-|B|}v(A\cup B)\\
&=\sum_{A\subseteq R}(-1)^{|R|-|A|}v(A)
  \sum_{B\subseteq L}(-1)^{|L|-|B|}=0.
\end{aligned}
\]
The second equality uses partnership, and the last follows from $L\ne\varnothing$. Now fix $i\in P$. In the expansion $\theta_i(v)=\sum_S\Delta_v(S)q_i(S)$, coalitions not containing $i$ contribute nothing by (N), and coalitions containing $i$ but not $P$ meet $P$ properly and so have zero dividend. Hence
\[
\theta_i(v)
=\sum_{S\supseteq P}\Delta_v(S)q_i(S)
=q_i(P)\sum_{S\supseteq P}\Delta_v(S)q_S(P)
=q_i(P)\theta_P(v)
=\theta_i\!\left(\theta_P(v)u_P\right).
\]
The second equality uses multiplicative conditioning for $S\supsetneq P$ and (UE) for $S=P$. For the third, expand $\theta_P(v)=\sum_S\Delta_v(S)q_S(P)$: coalitions disjoint from $P$ have $q_S(P)=0$ by (N), and coalitions meeting $P$ properly have zero dividend, so only $S\supseteq P$ remain. The last equality is linearity. This proves partnership consistency on all games. The implication from full to unanimity-game consistency is immediate.
\end{proof}

\section{Behavioral content}\label{sec:behavior}

To relate conditioning to two-player comparisons, define revealed priority as follows. For distinct $i,j\in N$, say that $i$ is revealed to have \emph{at least as high priority} as $j$, written $i\succeq j$, if $q_i(\{i,j\})>0$; that is, the evaluator strictly prefers position $i$ in $u_{\{i,j\}}$ to position $i$ in the null game. Under (N), (UE), and (UP), the two shares in $u_{\{i,j\}}$ are nonnegative and sum to one, so $\succeq$ is complete on distinct players; extend it reflexively by setting $i\succeq i$. When both shares are positive, define the revealed odds $\rho_{ij}=q_i(\{i,j\})/q_j(\{i,j\})$, and set $\rho_{ii}=1$. For nonempty $S$, let $M(S)=\{i\in S:i\succeq j\text{ for every }j\in S\setminus\{i\}\}$ be the members not revealed to have lower priority than another member.

Three conditions relate these pairwise comparisons to evaluations of larger coalitions. The first concerns the priority relation itself; the other two specify how pairwise priorities and odds carry over to larger coalitions.

\begin{condition}[Transitive priority, TP]\label{cond:priority}
For all $i,j,k\in N$, if $i\succeq j$ and $j\succeq k$, then $i\succeq k$.
\end{condition}

If $i$ ranks at least as high as $j$, and $j$ at least as high as $k$, then $i$ cannot rank below $k$. A tie in this ranking means that both positions have positive value in their pairwise game; it does not require equal shares.

\begin{condition}[Pairwise support, PS]\label{cond:support}
For every nonempty $S\subseteq N$, $\supp q(S)=M(S)$.
\end{condition}

A member receives a positive share exactly when no partner outranks that member in a pairwise comparison. Under transitivity, these are the members of the highest priority class represented in the coalition. The condition determines the support, but not the relative sizes of the positive shares.

\begin{condition}[Odds invariance, OI]\label{cond:odds}
For every nonempty $S\subseteq N$ and all distinct $i,j\in\supp q(S)$, both $q_i(\{i,j\})$ and $q_j(\{i,j\})$ are positive, and $q_i(S)/q_j(S)=\rho_{ij}$.
\end{condition}

Among members who receive positive shares, relative value is unaffected by the presence of other partners. If one position is worth twice another in their pairwise game, it must remain worth twice as much in any larger coalition where both receive positive shares. Absolute shares may fall as other members enter. The positivity clause makes (OI) well-defined independently of (PS). Under (PS), it is automatic, since any two members of $\supp q(S)$ are mutually related by $\succeq$. This is the independence-of-irrelevant-alternatives property of \citet{Luce1959}, applied to the evaluator's winning probabilities.

\begin{proposition}[Pairwise decomposition]\label{prop:decomp}
Under the von Neumann--Morgenstern and common-calibration assumptions, and given (N), (OR), (UE), and (UP), condition (LC) holds if and only if (TP), (PS), and (OI) hold.
\end{proposition}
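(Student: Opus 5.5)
For the forward direction I will go through Theorem~\ref{thm:main}: under the five axioms, $\theta=\phi^\omega$ for some weight system $\omega=(\lambda,\Sigma)$, so $q(S)=\phi^\omega(u_S)$ for every $S$. The three conditions can then be read off \eqref{eq:wsv}. For distinct $i,j$, $q_i(\{i,j\})>0$ exactly when $i$'s block is at least as high as $j$'s. Hence $\succeq$ is the weak order induced by block rank, which gives (TP). Moreover, $M(S)$ is the set of members of $S$ in the highest block represented in $S$, namely $A_\omega(S)=\supp q(S)$, which gives (PS). Finally, for distinct $i,j\in A_\omega(S)$ both lie in one block, so both pairwise shares are positive and $q_i(S)/q_j(S)=\lambda_i/\lambda_j=\rho_{ij}$, which gives (OI).

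For the converse I will assume (TP), (PS) and (OI), and verify (LC) in its multiplicative form $q_i(S)=q_S(T)q_i(T)$ for $\varnothing\ne T\subsetneq S$ with $q_S(T)>0$.

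\emph{Step 1: supports.} Since $q_S(T)>0$, some $k\in T$ lies in $\supp q(S)=M(S)$. I claim $M(T)=T\cap M(S)$.
\begin{itemize}
\item[(a)] If $i\in T\cap M(S)$, then $i\succeq j$ for all $j\in S\supseteq T$, so $i\in M(T)$.
\item[(b)] If $i\in M(T)$, then $i\succeq k$. Since $k\in M(S)$, we have $k\succeq j$ for every $j\in S$, and (TP) gives $i\succeq j$ (the case $j=k$ being direct). Hence $i\in M(S)$.
\end{itemize}
So by (PS), $\supp q(T)=T\cap\supp q(S)$.

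\emph{Step 2: shares.} On $T\setminus M(T)$ both sides of the identity vanish, since $q_i(S)=0$ there by Step 1. On $M(T)=T\cap M(S)$, (OI) applied in $S$ and in $T$ gives $q_i(S)/q_k(S)=\rho_{ik}=q_i(T)/q_k(T)$. Therefore $q_i(S)=c\,q_i(T)$ on $M(T)$ with $c=q_k(S)/q_k(T)$, which is well defined because $k\in\supp q(T)$. Summing over $T$ and using (UE) for $T$, so that $\sum_{i\in T}q_i(T)=1$, yields $c=q_S(T)$, which is (LC).

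The main obstacle is Step 1(b). There (TP) must be combined with the fact that $T$ contains some top member $k$ of $S$, and the reflexive and completeness conventions for $\succeq$ have to be handled with care. One point needs checking: when $|\supp q(S)|=1$, (OI) is vacuous. Step 2 does not need it in that case, because $M(T)$ is then a singleton and (UE) alone determines $q(T)$.
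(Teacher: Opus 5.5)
Your proof is correct and follows the paper's argument essentially step for step. The forward direction reads (TP), (PS) and (OI) off \eqref{eq:wsv} via Theorem~\ref{thm:main}, and the converse proves $M(T)=T\cap M(S)$ by the same (TP) argument before matching shares on $M(T)$ through (OI) in $S$ and in $T$. The only cosmetic difference is that you fix the proportionality constant by summing over $T$ and invoking (UE), whereas the paper writes both sides explicitly as $\rho_{ia}/\sum_{k\in M(T)}\rho_{ka}$.
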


\begin{proof}
Suppose (LC) holds. By Theorem~\ref{thm:main}, $\theta=\phi^\omega$ for some $\omega=(\lambda,\Sigma)$, and by \eqref{eq:wsv}, $i\succeq j$ if and only if the block of $i$ is at least as high as the block of $j$. This relation is transitive, so (TP) holds. Moreover, $M(S)$ is the set of members of $S$ in the highest block represented in $S$, which is $A_\omega(S)=\supp q(S)$, so (PS) holds. Finally, for $i,j\in A_\omega(S)$, both $q_i(S)/q_j(S)$ and $\rho_{ij}$ equal $\lambda_i/\lambda_j$, so (OI) holds.

Conversely, suppose (TP), (PS), and (OI) hold. Since $\succeq$ is complete and transitive, $M(S)$ is the top indifference class of $\succeq$ within $S$. Let $\varnothing\ne T\subsetneq S$ with $q_S(T)>0$. By (PS), $T$ meets $M(S)$. We claim $M(T)=T\cap M(S)$. Members of $T\cap M(S)$ are ranked at least as high as all of $S\supseteq T$, so they lie in $M(T)$. Conversely, take $i\in M(T)$ and $a\in T\cap M(S)$. Then $i\succeq a$ and $a\succeq j$ for every $j\in S$, so $i\succeq j$ for every $j\in S$ by (TP), and $i\in M(S)$.

Now fix $a\in M(T)$. By (OI) applied to $S$, $q_k(S)=\rho_{ka}\,q_a(S)$ for every $k\in M(S)$, and hence, for $i\in M(T)$,
\[
\frac{q_i(S)}{q_S(T)}=\frac{q_i(S)}{\sum_{k\in M(T)}q_k(S)}=\frac{\rho_{ia}}{\sum_{k\in M(T)}\rho_{ka}},
\]
where the first equality uses (PS) and $M(T)=T\cap M(S)$. By (PS), (N), and (UE), $q(T)$ is a probability distribution supported on $M(T)$, and by (OI) applied to $T$, $q_k(T)=\rho_{ka}\,q_a(T)$ for $k\in M(T)$. Hence $q_i(T)=\rho_{ia}/\sum_{k\in M(T)}\rho_{ka}$, which agrees with the display. For $i\in T\setminus M(T)$, $i\notin M(S)$ by the claim, so $q_i(S)=0=q_i(T)$. Thus (LC) holds.
\end{proof}

The three conditions are independent given the other axioms. Take $N=\{1,2,3\}$ throughout, set each singleton share to one and all outsider shares to zero, and extend linearly from the unanimity basis. With all two-player shares equal to $1/2$ and $q(N)=(1/2,1/2,0)$, (TP) and (OI) hold but (PS) fails. With all two-player shares equal to $1/2$ and $q(N)=(1/2,1/3,1/6)$, (TP) and (PS) hold but (OI) fails. For (TP), let players $1$ and $2$ tie, let $2$ and $3$ tie, and let $3$ have priority over $1$, so that $q(\{1,2\})=(1/2,1/2,0)$, $q(\{2,3\})=(0,1/2,1/2)$, $q(\{1,3\})=(0,0,1)$, and $q(N)=(0,1/2,1/2)$. Then $M(N)=\{2,3\}$, and (PS) and (OI) hold, but $1\succeq2\succeq3$ while $3$ is revealed strictly above $1$.

\begin{corollary}[Identification]\label{cor:identification}
Under the conditions of Theorem~\ref{thm:main}, certainty equivalents in two-player unanimity games identify the ordered priority blocks, all within-block weight ratios, and hence the value on every game. For distinct $i,j$, player $i$ has higher priority than $j$ if and only if $q_i(\{i,j\})=1$; they share a block if and only if both pairwise shares are positive, in which case $\lambda_i/\lambda_j=\rho_{ij}$. The recovered system has one block if and only if $q_i(N)>0$ for every $i$. With the normalization $\sum_{i\in N}\lambda_i=1$, one then has $\lambda_i=q_i(N)$ and $q_i(S)=\lambda_i/\sum_{j\in S}\lambda_j$. Equal grand-coalition certainty equivalents therefore give the symmetric Shapley value.
\end{corollary}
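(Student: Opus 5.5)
By Theorem~\ref{thm:main}, $\theta=\phi^\omega$ for some weight system $\omega=(\lambda,\Sigma)$. The plan is to read off pairwise shares from \eqref{eq:wsv}, show that these shares determine the blocks and within-block ratios, and then conclude that they determine the value on every game. For distinct $i,j$, formula \eqref{eq:wsv} with $S=\{i,j\}$ gives three cases. If $i$'s block is strictly higher, then $A_\omega(\{i,j\})=\{i\}$, so $q_i(\{i,j\})=1$ and $q_j(\{i,j\})=0$. If the block of $j$ is strictly higher, the reverse holds. If they share a block, both shares $\lambda_i/(\lambda_i+\lambda_j)$ and $\lambda_j/(\lambda_i+\lambda_j)$ are positive, and $\rho_{ij}=\lambda_i/\lambda_j$. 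These cases are exhaustive and exclusive, which gives both stated equivalences.

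Identification follows from this. The relation ``same block'' is the set of pairs with both shares positive, so the partition into blocks is determined. The order between two distinct blocks is determined by $q_i(\{i,j\})=1$ for any representatives $i,j$. Within each block, choosing a reference member $a$ gives $\lambda_i=\rho_{ia}\lambda_a$. Thus $\lambda$ restricted to a block is determined up to a positive scalar. Each unanimity share in \eqref{eq:wsv} is a ratio of weights within $A_\omega(S)$, a single block, so it is invariant to that scalar. The values $q(S)$ for all $S$ are therefore determined, and linearity through \eqref{eq:mobius} determines $\theta$ on every game. This also shows that any two weight systems consistent with the same pairwise data yield the same value.

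For the one-block statement, $\supp q(N)=A_\omega(N)$ by \eqref{eq:wsv}, which is the top block. Hence $q_i(N)>0$ for all $i$ exactly when the top block is $N$, that is, when $m=1$. In that case $A_\omega(S)=S$ for all $S$, so $q_i(S)=\lambda_i/\sum_{j\in S}\lambda_j$. Normalizing $\sum_{i\in N}\lambda_i=1$ then gives $q_i(N)=\lambda_i$. If all $q_i(N)$ are equal, then $\lambda_i=1/|N|$ and $q_i(S)=1/|S|$, which is the Shapley value on the unanimity basis and hence, by linearity, on every game.

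The only delicate step is the claim that the blocks are identified ``as an ordered partition.'' This requires that being in the same block be an equivalence relation and that the strict order be consistent across representatives. Both hold because the relation is induced by the given $\omega$. Alternatively, they follow from (TP) in Proposition~\ref{prop:decomp}. Uniqueness of weights is claimed only up to a separate scalar for each block, and that is all the argument needs.
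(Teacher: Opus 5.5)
Your proposal is correct and follows essentially the same route as the paper: you read off the three exhaustive pairwise cases from \eqref{eq:wsv}, note that rescaling weights within a block changes no unanimity share, extend to all games by linearity, and apply \eqref{eq:wsv} to $N$ for the one-block and symmetric statements. The steps you leave implicit are minor. First, equal grand-coalition shares must each equal $1/|N|>0$ by (UE), which places you in the one-block case. Second, there is the trivial $|N|=1$ case, which the paper handles explicitly via calibration.
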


\begin{proof}
By \eqref{eq:wsv}, a pair from different blocks assigns the entire unit to its higher-priority member. A pair from the same block assigns strictly positive shares in the ratio of the two weights. Thus the pairwise observations determine the blocks, their ordering, and every within-block weight ratio. Multiplying all weights in a block by the same positive constant changes no unanimity share. Conversely, identical pairwise observations leave no other freedom in the weights. Linearity then determines the value on every game. The assertions about a single block and equal shares follow by applying \eqref{eq:wsv} to $N$. For $|N|=1$, calibration already determines the unique value.
\end{proof}

Identification under the model differs from deciding whether a collection of pairwise observations is compatible with it. The following result gives the exact compatibility test. It concerns a complete set of two-player observations; no larger-coalition evaluations are assumed.

\begin{proposition}[Pairwise rationalizability]\label{prop:pairwise}
For every unordered pair $\{i,j\}$, suppose observed shares $b_i^{ij},b_j^{ij}\geq0$ sum to one. Define $i\succeq_b j$ by $b_i^{ij}>0$ for distinct players and extend the relation reflexively. Whenever both shares are positive, let $r_{ij}=b_i^{ij}/b_j^{ij}$, and put $r_{ii}=1$. There exists a weighted Shapley value matching all these observations if and only if:
\begin{enumerate}
\item[(i)] $\succeq_b$ is transitive;
\item[(ii)] $r_{ij}r_{jk}=r_{ik}$ for every three distinct players with mutually positive pairwise shares.
\end{enumerate}
When it exists, the value is unique. Its ordered partition is the partition into indifference classes of $\succeq_b$. In each class $B$, choosing any $a\in B$ gives the uniquely normalized weights
\begin{equation}\label{eq:recovery}
\lambda_i=\frac{r_{ia}}{\sum_{j\in B}r_{ja}}\qquad(i\in B),
\end{equation}
with $\sum_{i\in B}\lambda_i=1$.
\end{proposition}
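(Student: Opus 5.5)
Necessity, sufficiency and uniqueness are handled separately, and throughout I use the explicit form \eqref{eq:wsv} of $\phi^\omega$ on two-player unanimity games.

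\emph{Necessity.} Suppose $\phi^\omega$ matches the data, so $b_i^{ij}=\phi^\omega_i(u_{\{i,j\}})$. By \eqref{eq:wsv}, $b_i^{ij}>0$ exactly when the block of $i$ is at least as high as the block of $j$. Hence $\succeq_b$ coincides on distinct players with the weak order ``block of $i$ $\geq$ block of $j$'', which is transitive, and (i) holds. Mutually positive shares for $\{i,j\}$ occur only within a block, where $r_{ij}=\lambda_i/\lambda_j$. So for three players with mutually positive pairwise shares all three lie in one block, and $r_{ij}r_{jk}=(\lambda_i/\lambda_j)(\lambda_j/\lambda_k)=r_{ik}$, giving (ii).

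\emph{Sufficiency.} Assume (i) and (ii). For distinct $i,j$ the shares sum to one, so at least one is positive and $\succeq_b$ is complete. With (i) it is therefore a weak order. Let $\Sigma$ list its indifference classes from lowest to highest. Two distinct players are indifferent exactly when both pairwise shares are positive, so $r_{ij}$ is defined within each class $B$. Fix $a\in B$ and define $\lambda$ by \eqref{eq:recovery}. It remains to check that $\phi^\omega$ reproduces every pair.
\begin{itemize}
\item[] \emph{Pairs in different classes.} If $i$ is in the higher class, then $j\not\succeq_b i$, so $b_j^{ij}=0$ and $b_i^{ij}=1$. This matches \eqref{eq:wsv}, which gives the whole unit to the higher block.
\item[] \emph{Pairs in the same class $B$.} Then $\phi_i^\omega(u_{\{i,j\}})/\phi_j^\omega(u_{\{i,j\}})=\lambda_i/\lambda_j=r_{ia}/r_{ja}$. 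I need this to equal $r_{ij}$. Reciprocity $r_{ja}r_{aj}=1$ is immediate from the definition, and for $a\notin\{i,j\}$ condition (ii) gives $r_{ia}=r_{ij}r_{ja}$. The degenerate cases $a\in\{i,j\}$ follow from $r_{aa}=1$ and reciprocity.
\end{itemize}
Since the two shares of $\phi^\omega$ and the two observed shares each sum to one and have the same ratio, they coincide.

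\emph{Uniqueness.} Suppose $\phi^{\omega'}$ also matches the data. By the necessity argument its block order must coincide with $\succeq_b$, so its ordered partition is $\Sigma$. Its within-block ratios equal $r_{ij}$. Corollary~\ref{cor:identification} then shows that blocks, their order and within-block ratios determine every unanimity share, and by linearity the whole value, so $\phi^{\omega'}=\phi^\omega$. Normalizing each block to sum to one pins down \eqref{eq:recovery}; independence of the choice of $a$ follows from the product rule, since changing $a$ to $a'$ rescales every $r_{\cdot a}$ by the common factor $r_{aa'}$.

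The main obstacle is the bookkeeping in the same-class step: (ii) is stated only for distinct triples, so the coincident-index cases must be handled via reciprocity and $r_{ii}=1$, and one must confirm that each case (ii) is invoked for has mutually positive shares. That holds because all players involved lie in a single indifference class.
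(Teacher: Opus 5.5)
Your proof is correct and follows essentially the same route as the paper. Necessity is read off from \eqref{eq:wsv}. Sufficiency builds the weak order, defines the weights by \eqref{eq:recovery}, and checks within- and between-class pairs, with the repeated-index cases handled by $r_{aa}=1$ and reciprocity. Uniqueness comes from every matching value having the same blocks and within-block ratios. The only minor difference is that you verify independence of the choice of $a$ directly through the product rule, whereas the paper derives it as a consequence of uniqueness.
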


\begin{proof}
Necessity follows from \eqref{eq:wsv}: the block order is transitive and, within a block, $r_{ij}=\lambda_i/\lambda_j$.

For sufficiency, completeness follows from the pairwise shares summing to one. Together with (i), it makes $\succeq_b$ a weak order. List its indifference classes from lowest to highest and choose $a$ in each class $B$. Define the positive weights by \eqref{eq:recovery}. For $i,j\in B$, condition (ii) gives $r_{ia}=r_{ij}r_{ja}$ when $i,j,a$ are distinct. The same identity follows from $r_{aa}=1$ and $r_{ij}r_{ji}=1$ when an index repeats. Hence $\lambda_i/\lambda_j=r_{ij}$. The weighted Shapley value with these weights and classes matches a within-class pair because its shares sum to one and have the observed ratio. It matches a between-class pair because the higher member's observed share is one and the lower member's is zero. Every matching value must have these same classes and ratios, so \eqref{eq:wsv} and linearity imply uniqueness. Formula~\eqref{eq:recovery} is consequently independent of the choice of $a$. A singleton class has weight one, covering also $|N|=1$.
\end{proof}

\begin{example}\label{ex:recovery}
Let $N=\{1,2,3,4\}$. Suppose the observed shares are $(2/3,1/3)$ for the pair $\{1,2\}$ and $(3/4,1/4)$ for $\{3,4\}$, while in every pair with one member from each group the member of $\{1,2\}$ receives one. The recovered partition is $\Sigma=(\{3,4\},\{1,2\})$, with normalized weights $(\lambda_1,\lambda_2)=(2/3,1/3)$ and $(\lambda_3,\lambda_4)=(3/4,1/4)$. Thus $q(\{1,2,3\})=(2/3,1/3,0,0)$, whereas removing both higher-priority players gives $q(\{3,4\})=(0,0,3/4,1/4)$. For the game $v=4u_{\{1,2\}}+8u_{\{3,4\}}-2u_{\{1,2,3\}}+3u_{\{1,4\}}$, linearity gives $\theta(v)=(13/3,2/3,6,2)$, whose coordinates sum to $v(N)=13$. The pairwise observations therefore determine both the treatment of priority and the allocation of signed dividends.
\end{example}

\begin{corollary}[Testable restrictions]\label{cor:tests}
Under the conditions of Theorem~\ref{thm:main}, the evaluator's certainty equivalents satisfy the following.
\begin{enumerate}
\item[(a)] \emph{Product rule.} If $i,j,k$ are pairwise revealed to share a block, then $\rho_{ij}\rho_{jk}=\rho_{ik}$.
\item[(b)] \emph{Regularity.} For $i\in S$ and $k\notin S$, $q_i(S\cup\{k\})\le q_i(S)$.
\item[(c)] \emph{Irrelevance of lower-priority partners.} For nonempty $S$ and $k\notin S$, if $q_k(S\cup\{k\})=0$, then $q_i(S\cup\{k\})=q_i(S)$ for every $i\in S$.
\item[(d)] \emph{Overidentification.} Every certainty equivalent in a unanimity game with three or more members is determined by the two-player certainty equivalents.
\end{enumerate}
\end{corollary}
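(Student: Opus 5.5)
By Theorem~\ref{thm:main}, we may take $\theta=\phi^\omega$ for some weight system $\omega=(\lambda,\Sigma)$. Each of the four items then follows from the explicit unanimity shares in \eqref{eq:wsv}, together with Corollary~\ref{cor:identification} for translating revealed relations into block membership. Throughout we write $b(i)$ for the block index of player $i$.

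For (a), Corollary~\ref{cor:identification} says that $i,j,k$ pairwise sharing positive shares means they lie in a common block, and that $\rho_{ij}=\lambda_i/\lambda_j$. The product rule is then $(\lambda_i/\lambda_j)(\lambda_j/\lambda_k)=\lambda_i/\lambda_k$.

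For (d), \eqref{eq:wsv} expresses $q(S)$ through $A_\omega(S)$ and the within-block weight ratios. Corollary~\ref{cor:identification} (equivalently, the uniqueness in Proposition~\ref{prop:pairwise}) says both are determined by the two-player shares, so every $q(S)$ with $|S|\ge3$ is a function of the pairwise data. One could also check the explicit formula: $\supp q(S)=M(S)$ by (PS) from Proposition~\ref{prop:decomp}, and the shares are proportional to $\rho_{\cdot a}$ for a fixed $a\in M(S)$, as in the proof of that proposition.

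For (b) and (c), compare $A_\omega(S)$ with $A_\omega(S\cup\{k\})$. Let $r$ be the top block represented in $S$. There are three cases.
\begin{itemize}
\item If $b(k)<r$, the top block is unchanged and $k$ is excluded, so $q(S\cup\{k\})$ agrees with $q(S)$ on $S$.
\item If $b(k)=r$, then $A_\omega(S\cup\{k\})=A_\omega(S)\cup\{k\}$. For $i\in A_\omega(S)$, the denominator $\sum_{j\in A_\omega(S)}\lambda_j$ increases by $\lambda_k>0$, so $q_i$ strictly decreases; other members of $S$ keep share zero.
\item If $b(k)>r$, every $i\in S$ drops to share $0\le q_i(S)$.
\end{itemize}
This gives (b). For (c), $q_k(S\cup\{k\})=0$ means $k\notin A_\omega(S\cup\{k\})$, which by \eqref{eq:wsv} occurs exactly when $b(k)<r$, and that is the first case. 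Alternatively, (c) follows directly from (LC) with $T=S$: then $q_{S\cup\{k\}}(S)=1-q_k(S\cup\{k\})=1$ by (UE), so $q_i(S)=q_i(S\cup\{k\})$.

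None of these steps is a real obstacle. The only point needing care is the case analysis in (b): one must make sure that $k$ joining the top block decreases shares through the denominator rather than violating some support condition. The case analysis above already handles this.
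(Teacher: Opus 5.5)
Your proof is correct, and for (a), (d), and your alternative argument for (c) it coincides with the paper's. The difference is in (b) and in your primary argument for (c). There you go through the explicit formula \eqref{eq:wsv} and split on whether $k$'s block lies below, at, or above the top block $r$ of $S$. The paper instead stays at the level of the axioms. With $S'=S\cup\{k\}$, either $q_{S'}(S)=0$, and then (UP) forces $q_i(S')=0\le q_i(S)$, or (LC) with $T=S$ gives $q_i(S')=q_{S'}(S)\,q_i(S)$ with $0<q_{S'}(S)\le 1$ by (UE) and (UP). For (c), (UE) turns $q_k(S')=0$ into $q_{S'}(S)=1$.

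The paper's route is shorter and needs no case split. It shows that regularity and irrelevance of lower-priority partners follow from the single instance of conditioning with $T=S\subsetneq S\cup\{k\}$, together with (UE) and (UP), without reconstructing $\omega$ via Theorem~\ref{thm:main}. The same two lines in fact give the exact identity $q_i(S\cup\{k\})=\bigl(1-q_k(S\cup\{k\})\bigr)q_i(S)$ for $i\in S$.

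Your route makes the priority interpretation explicit, which the paper leaves implicit. It shows that $q_k(S\cup\{k\})=0$ holds exactly when $k$'s block is below the top block of $S$. It also shows that the inequality in (b) is strict exactly when $i\in A_\omega(S)$ and $k$'s block is at least that top block.
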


\begin{proof}
(a) The representation gives $\rho_{ij}\rho_{jk}=(\lambda_i/\lambda_j)(\lambda_j/\lambda_k)=\lambda_i/\lambda_k=\rho_{ik}$. (b) Put $S'=S\cup\{k\}$. If $q_{S'}(S)=0$, (UP) gives $q_i(S')=0\le q_i(S)$. Otherwise, (LC) gives $q_i(S')=q_{S'}(S)q_i(S)$, and (UE) and (UP) imply $0<q_{S'}(S)\le1$, proving the inequality. (c) By (UE), $q_k(S\cup\{k\})=0$ implies $q_{S\cup\{k\}}(S)=1$; applying (LC) gives the claim. (d) This is Corollary~\ref{cor:identification}.
\end{proof}

To elicit a probability equivalent for $(u_S,i)$, find the probability $p$ at which the evaluator is indifferent between that position and the lottery $[p(u_{\{i\}},i);(1-p)(\bzero,i)]$. These are comparisons by a single evaluator across positions. They need not describe the beliefs or bargaining behavior of the players themselves.

The two-player observations test transitivity and the product rule on triples of players. Proposition~\ref{prop:pairwise} shows that passing these tests determines a candidate weighted Shapley value. With $n$ players and maintained (UE), there is one independent probability equivalent for each of the $\binom n2$ pairs. Eliciting both positions gives $n(n-1)$ observations and also checks (UE) on pairs. Observations from any of the $2^n-1-n-\binom n2$ unanimity games with at least three members can then be compared with the candidate's predicted support and shares. A failure of (TP) rejects a priority ranking; a failure of (PS) rejects its prediction of which positions receive positive value; and a failure of (OI) rejects the constancy of relative positive shares.

Passing the tests on three-player games does not validate evaluations of all larger games. For example, with four players, assign uniform shares to every proper coalition but set $q(N)=(1/2,1/6,1/6,1/6)$. All observations on coalitions of size at most three agree with the ordinary Shapley value, yet (LC) fails on $N$. Moreover, extrapolation from unanimity games to arbitrary games relies on (OR) and (N); these assumptions cannot be tested by unanimity observations alone.

The statements above concern exact indifferences. An experiment would need to specify what holding a position entails and how its eventual payoff is determined, as well as account for elicitation error. In particular, distinguishing a zero share from a small positive share is necessary to identify a strict priority boundary. The paper provides population restrictions and a recovery formula, rather than an incentive-compatible elicitation mechanism or a statistical test.

\section{Conclusion}\label{sec:concl}

Preferences over positions provide a way to identify the asymmetries represented by weighted Shapley values. Under the calibration, nullity, risk-neutrality, efficiency, and positivity assumptions, Luce conditioning characterizes the Kalai--Samet family. Two-player unanimity games identify the ordered priority blocks and the weight ratios within each block. Evaluations of larger unanimity games must then agree with the supports and shares predicted by those pairwise comparisons. The equivalence with partnership consistency connects this preference characterization to the existing allocation-rule approach, while the decomposition of conditioning distinguishes restrictions on priority from restrictions on relative shares.

\appendix
\section*{Appendix}
\section{Independence of the axioms}\label{sec:indep}

\begin{proposition}[Independence]\label{prop:independence}
The five axioms in Theorem~\ref{thm:main} are logically independent for $N=\{1,2,3\}$. Three players is the minimal size for joint independence: when $|N|\le 2$, (LC) is implied by calibration.
\end{proposition}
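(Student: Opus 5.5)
The plan is to give, for $N=\{1,2,3\}$, five evaluators. Each one is specified by a pure-position index $\theta$ satisfying the vNM and calibration normalizations, and each violates exactly one axiom while satisfying the other four. Calibration is built in throughout: we set $\theta_i(\bzero)=0$ and $\theta_i(u_{\{i\}})=1$. Afterwards we show that when $|N|\le 2$ the axiom (LC) is automatic.

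For (LC) alone, we reuse the example already given after Proposition~\ref{prop:decomp}. All two-player shares equal $1/2$ and $q(N)=(1/2,1/3,1/6)$, with the map extended linearly. Here (N), (OR), (UE) and (UP) hold by construction. (LC) fails because conditioning $q(N)$ on $\{1,2\}$ gives $(3/5,2/5)\ne(1/2,1/2)$.

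For (UP) alone, take a signed weighted-Shapley-type rule with a negative weight. For instance, let $q(S)$ be proportional to $\lambda$ on $S$ with $\lambda=(2,2,-1)$; every coalition sum is nonzero, so this is well defined. Then $q(\{1,3\})=(2,0,-1)$ violates (UP). Luce conditioning holds exactly, because every conditional is the proportional rule. Linearity gives (OR), (UE) holds by construction, and nullity holds because shares vanish outside $S$.

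For (UE) alone, take $\theta_i(v)=\sum_{S\ni i}\Delta_v(S)$, the dividend given in full to every member. This is linear, satisfies (N) and (UP), and keeps singleton calibration. It violates (UE) on pairs. For (LC) with $T\subsetneq S$ we have $q_S(T)=|T|$, so $q_i(T)=1\ne 1/|T|$ whenever $|T|\ge 2$. (LC) therefore fails too, and this candidate must be modified. Instead, fix $q(S)$ for $|S|\le2$ as the Shapley shares and let $q(N)=(1,1,1)/c$ only if the conditioning still holds. Conditioning forces $q_i(N)=q_N(T)/2$ on pairs, which makes all $q_i(N)$ equal to some $t$. Choosing $t=1/2$ then gives $\sum_i q_i(N)=3/2\ne1$, while for each pair $q_i(N)/q_N(T)=1/2$. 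This gives (N), (OR), (UP) and (LC) without (UE).

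For (N) alone, keep (OR), (UE), (UP) and (LC) on unanimity shares inside $S$, but give an outsider a positive value in a null-player game. A simple way is to set $\theta_i(v)=\phi_i(v)+\alpha\,(v(N)-\ldots)$. More concretely, assign outsider values $q_k(S)\ne0$ for $k\notin S$ with $|S|=2$. Then (UE) still sums only over $i\in S$, and (UP) and (LC) only constrain members of $S$ or $T\subseteq S$. One subtlety is that (LC) sums $q_S(T)$ over $T\subseteq S$, so outsider shares do not enter.

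For (OR) alone, take a nonlinear $\theta$ that coincides with the Shapley value on all multiples $c\,u_S$ with $c\ge0$. An example is $\theta_i(v)=\mathrm{Sh}_i(v)+g(v)$, with $g$ vanishing on nonnegative multiples of unanimity games and on games where $i$ is null. Choose $g$ to be supported on games that are "mixed", for instance $g_i(v)=\prod_{S}|\Delta_v(S)|$ over two distinct coalitions containing $i$. Then (UE), (UP) and (LC), which are stated only through $q$, still hold; (N) holds because a null player has all its dividends zero; and (OR) fails.

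The minimality part has two halves. For $|N|\le2$, every proper nonempty $T\subsetneq S$ is a singleton, and calibration gives $q_i(\{i\})=1=q_i(S)/q_S(\{i\})$, so (LC) holds automatically. Joint independence then fails, since (LC) cannot be violated. The hardest step is checking the (N)-failing and (OR)-failing constructions carefully against calibration and the exact wording of the axioms.
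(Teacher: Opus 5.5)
Your proposal is correct and follows essentially the paper's route: the same (LC) counterexample, signed proportional weights for (UP) (your $\lambda=(2,2,-1)$ in place of $a=(4,-1,-2)$), uniform proper shares with inflated grand shares for (UE) (your $t=1/2$ in place of $2/3$), nonzero outsider shares with linear extension for (N), a nonlinear term that vanishes on unanimity games and at null players for (OR), and the singleton-conditioning argument for $|N|\le 2$. The one real variation is your degree-two perturbation $|\Delta_v(S_1)|\,|\Delta_v(S_2)|$ (for instance with $S_1=\{i\}$, $S_2=\{i,j\}$, which gives $5/2\neq 3/2$ at $v=2u_{S_1}$, $w=2u_{S_2}$, $p=1/2$), versus the paper's positively homogeneous $h$, which keeps $\theta$ homogeneous so that only additivity fails; either suffices, but in a write-up you should drop the discarded candidates and the dangling formula in the (N) paragraph, and state the member shares (e.g.\ uniform) explicitly there.
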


\begin{proof}
For minimality, (LC) is vacuous when $|N|=1$. When $|N|=2$, its only instances have the form $T=\{i\}\subsetneq S=\{i,j\}$ with $q_i(S)>0$, where it requires $q_i(\{i\})=1$, which is calibration.

Each construction below preserves common calibration and induces vNM preferences by evaluating lotteries at the expectation of $\theta$. When $q$ is specified, set outsider shares to zero and extend linearly from the unanimity basis; this extension satisfies (N) and (OR). Let $\phi$ be the ordinary Shapley value.

To establish independence of (N), let $\theta_i(v)=\phi_i(v)+\varepsilon v(N\setminus\{i\})$, $\varepsilon>0$. Calibration is preserved because $u_{\{i\}}(N\setminus\{i\})=0$, and member unanimity shares remain uniform, so (OR), (UE), (UP), and (LC) hold. But $i$ is null in $u_{\{j\}}$ and receives $\varepsilon$ when $i\ne j$. This $\theta$ also violates full efficiency, consistent with the equivalence between (UE) and efficiency holding only in the presence of (N).

To establish independence of (OR), for each $i$, fix $j(i)\in N\setminus\{i\}$ and set
\[
\theta_i(v)=\phi_i(v)+h\!\left(\Delta_v(\{i\}),\Delta_v(\{i,j(i)\})\right),
\qquad
h(a,b)=\begin{cases}a|b|/(|a|+|b|),&(a,b)\ne(0,0),\\0,&(a,b)=(0,0).
\end{cases}
\]
The perturbation is positively homogeneous but nonadditive and vanishes on every unanimity game and whenever $i$ is null. The other axioms hold. Yet $\theta_i(u_{\{i\}}+u_{\{i,j(i)\}})=2$, whereas $u_{\{i\}}+u_{\{i,j(i)\}}$ is the equal mixture of $2u_{\{i\}}$ and $2u_{\{i,j(i)\}}$, and the equal lottery over $(2u_{\{i\}},i)$ and $(2u_{\{i,j(i)\}},i)$ has utility $3/2$.

To establish independence of (UE), for every nonempty proper $S\subsetneq N$, set $q_i(S)=1/|S|$ for $i\in S$, and set $q_i(N)=2/3$ for every $i\in N$. Uniform shares are coherent within proper carriers; for $\varnothing\ne T\subsetneq N$, $q_N(T)=2|T|/3$ and conditioning again gives $q_i(T)=1/|T|$. Yet the grand shares sum to two.

To establish independence of (UP), for $a=(4,-1,-2)$, let $q_i(S)=a_i/\sum_{j\in S}a_j$ for $i\in S$. All nonempty coalition sums are nonzero, and whenever $q_S(T)>0$, (LC) follows by cancellation. But $q_2(N)=-1$.

To establish independence of (LC), use uniform proper-coalition shares and $q(N)=(1/2,1/3,1/6)$. Conditioning on $\{1,2\}$ gives $(3/5,2/5)$ instead of $(1/2,1/2)$.
\end{proof}

\section*{Declaration of generative AI and AI-assisted technologies in the manuscript preparation process}
Generative AI tools assisted with drafting and language editing, literature searches, and scrutiny of mathematical arguments. Responsibility for the arguments, citations, and final content rests with the author.

\begingroup\singlespacing
\bibliographystyle{plainnat}
\bibliography{references}

@article{BealEtAl2023,
  author  = {B{\'e}al, Sylvain and Ferri{\`e}res, Sylvain and Navarro-Ramos, Adriana and Solal, Philippe},
  title   = {Axiomatic Characterizations of the Family of Weighted Priority Values},
  journal = {International Journal of Economic Theory},
  year    = {2023},
  volume  = {19},
  number  = {4},
  pages   = {787--816},
  doi     = {10.1111/ijet.12375}
}

@article{BlairMcLean1990,
  author  = {Blair, Douglas H. and McLean, Richard P.},
  title   = {Subjective Evaluations of {$n$}-Person Games},
  journal = {Journal of Economic Theory},
  year    = {1990},
  volume  = {50},
  number  = {2},
  pages   = {346--361},
  doi     = {10.1016/0022-0531(90)90007-7}
}

@article{BillotThisse2005,
  author  = {Billot, Antoine and Thisse, Jacques-Fran{\c{c}}ois},
  title   = {How to Share When Context Matters: The {M{\"o}bius} Value as a Generalized Solution for Cooperative Games},
  journal = {Journal of Mathematical Economics},
  year    = {2005},
  volume  = {41},
  number  = {8},
  pages   = {1007--1029},
  doi     = {10.1016/j.jmateco.2004.12.008}
}

@article{CerreiaVioglioEtAl2021,
  author  = {Cerreia-Vioglio, Simone and Lindberg, Per Olov and Maccheroni, Fabio and Marinacci, Massimo and Rustichini, Aldo},
  title   = {A Canon of Probabilistic Rationality},
  journal = {Journal of Economic Theory},
  year    = {2021},
  volume  = {196},
  pages   = {105289},
  doi     = {10.1016/j.jet.2021.105289}
}

@article{Dehez2017,
  author  = {Dehez, Pierre},
  title   = {On {Harsanyi} Dividends and Asymmetric Values},
  journal = {International Game Theory Review},
  year    = {2017},
  volume  = {19},
  number  = {3},
  pages   = {1750012},
  doi     = {10.1142/S0219198917500128}
}

@article{DerksHallerPeters2000,
  author  = {Derks, Jean and Haller, Hans and Peters, Hans},
  title   = {The Selectope for Cooperative Games},
  journal = {International Journal of Game Theory},
  year    = {2000},
  volume  = {29},
  number  = {1},
  pages   = {23--38},
  doi     = {10.1007/s001820050003}
}

@article{Chun1991,
  author  = {Chun, Youngsub},
  title   = {On the Symmetric and Weighted {Shapley} Values},
  journal = {International Journal of Game Theory},
  year    = {1991},
  volume  = {20},
  number  = {2},
  pages   = {183--190},
  doi     = {10.1007/BF01240278}
}

@book{Luce1959,
  author    = {Luce, R. Duncan},
  title     = {Individual Choice Behavior: A Theoretical Analysis},
  publisher = {John Wiley \& Sons},
  address   = {New York},
  year      = {1959}
}

@article{NowakRadzik1995,
  author  = {Nowak, Andrzej S. and Radzik, Tadeusz},
  title   = {On Axiomatizations of the Weighted {Shapley} Values},
  journal = {Games and Economic Behavior},
  year    = {1995},
  volume  = {8},
  number  = {2},
  pages   = {389--405},
  doi     = {10.1016/S0899-8256(05)80008-9}
}

@article{HersteinMilnor1953,
  author  = {Herstein, I. N. and Milnor, John},
  title   = {An Axiomatic Approach to Measurable Utility},
  journal = {Econometrica},
  year    = {1953},
  volume  = {21},
  number  = {2},
  pages   = {291--297},
  doi     = {10.2307/1905540}
}

@article{KalaiSamet1987,
  author  = {Kalai, Ehud and Samet, Dov},
  title   = {On Weighted {Shapley} Values},
  journal = {International Journal of Game Theory},
  year    = {1987},
  volume  = {16},
  number  = {3},
  pages   = {205--222},
  doi     = {10.1007/BF01756292}
}

@incollection{KalaiSamet1988,
  author    = {Kalai, Ehud and Samet, Dov},
  title     = {Weighted {Shapley} Values},
  booktitle = {The Shapley Value: Essays in Honor of Lloyd S. Shapley},
  editor    = {Roth, Alvin E.},
  publisher = {Cambridge University Press},
  year      = {1988},
  pages     = {83--100},
  doi       = {10.1017/CBO9780511528446.007}
}

@incollection{Weber1988,
  author    = {Weber, Robert James},
  title     = {Probabilistic Values for Games},
  booktitle = {The Shapley Value: Essays in Honor of Lloyd S. Shapley},
  editor    = {Roth, Alvin E.},
  publisher = {Cambridge University Press},
  year      = {1988},
  pages     = {101--120},
  doi       = {10.1017/CBO9780511528446.008}
}

@article{Casajus2019,
  author  = {Casajus, Andr{\'e}},
  title   = {Relaxations of Symmetry and the Weighted {Shapley} Values},
  journal = {Economics Letters},
  year    = {2019},
  volume  = {176},
  pages   = {75--78},
  doi     = {10.1016/j.econlet.2018.12.031}
}

@article{Casajus2026,
  author  = {Casajus, Andr{\'e}},
  title   = {Related Characterizations of the {Shapley} Value and the Weighted {Shapley} Values via Relaxations of Differential Marginality},
  journal = {Discrete Applied Mathematics},
  year    = {2026},
  volume  = {378},
  pages   = {136--150},
  doi     = {10.1016/j.dam.2025.06.069}
}

@article{Roth1977BA,
  author  = {Roth, Alvin E.},
  title   = {Bargaining Ability, the Utility of Playing a Game, and Models of Coalition Formation},
  journal = {Journal of Mathematical Psychology},
  year    = {1977},
  volume  = {16},
  number  = {2},
  pages   = {153--160},
  doi     = {10.1016/0022-2496(77)90064-5}
}

@article{Roth1977SV,
  author  = {Roth, Alvin E.},
  title   = {The {Shapley} Value as a von {Neumann}--{Morgenstern} Utility},
  journal = {Econometrica},
  year    = {1977},
  volume  = {45},
  number  = {3},
  pages   = {657--664},
  doi     = {10.2307/1911680}
}

@incollection{Shapley1953,
  author    = {Shapley, Lloyd S.},
  title     = {A Value for $n$-Person Games},
  booktitle = {Contributions to the Theory of Games II},
  editor    = {Kuhn, Harold W. and Tucker, Albert W.},
  series    = {Annals of Mathematics Studies},
  volume    = {28},
  publisher = {Princeton University Press},
  address   = {Princeton, NJ},
  year      = {1953},
  pages     = {307--317},
  doi       = {10.1515/9781400881970-018}
}

@techreport{Dehez2026,
  author      = {Dehez, Pierre},
  title       = {Comment on ``How to Share When Context Matters: The {M{\"o}bius} Value as a Generalized Solution for Cooperative Games'' by {A. Billot} and {J. Thisse} (2005)},
  institution = {Universit{\'e} catholique de Louvain, CORE},
  type        = {LIDAM Discussion Paper CORE},
  number      = {2026/05},
  year        = {2026}
}
\endgroup

\end{document}